\theoremstyle{plain}
\theoremstyle{remark}
\newtheorem{theorem}{Theorem}
\newtheorem{remark}{Remark}
\newcommand{\bA}{\mathbf{A}}
\newcommand{\bY}{\mathbf{Y}}
\newcommand{\bX}{\mathbf{X}}
\newcommand{\bV}{\mathbf{V}}
\newcommand{\bI}{\mathbf{I}}
\newcommand{\bZ}{\mathbf{Z}}
\newcommand{\bW}{\mathbf{W}}
\newcommand{\bD}{\mathbf{D}}
\newcommand{\bU}{\mathbf{U}}
\newcommand{\bu}{\mathbf{u}}
\newcommand{\bepsilon}{\boldsymbol\epsilon}
\newcommand{\T}{\intercal}
\title{Disentangling network dependence among multiple variables}
\author{Zhejia Dong\thanks{Department of Biostatistics, Brown University}, Corwin Zigler\thanks{Department of Biostatistics, Brown University}, and Youjin Lee\thanks{Department of Biostatistics, Brown University. Email: youjin\_lee@brown.edu}}
\date{}
\begin{document}
\sloppy
\maketitle
\linespread{1.15}

\begin{abstract}
When two variables depend on the same or similar underlying network, their shared network dependence structure can lead to spurious associations. 
While statistical associations between two variables sampled from interconnected subjects are a common inferential goal across various fields, little research has focused on how to disentangle shared dependence for valid statistical inference. We revisit two different approaches from distinct fields that may address shared network dependence: the pre-whitening approach, commonly used in time series analysis to remove the shared temporal dependence, and the network autocorrelation model, widely used in network analysis often to examine or account for autocorrelation of the outcome variable. We demonstrate how each approach implicitly entails assumptions about how a variable of interest propagates among nodes via network ties given the network structure. 
We further propose adaptations of existing pre-whitening methods to the network setting by explicitly reflecting underlying assumptions about "level of interaction" that induce network dependence, while accounting for its unique complexities. Our simulation studies demonstrate the effectiveness of the two approaches in reducing spurious associations due to shared network dependence when their respective assumptions hold. However, the results also show the sensitivity to assumption violations, underscoring the importance of correctly specifying the shared dependence structure based on available network information and prior knowledge about the interactions driving dependence.
\end{abstract}

\noindent%
{\it Keywords:}  Autocorrelation, Pre-whitening, Network autocorrelation model

\section{Introduction}

\textit{Network dependence}, or network autocorrelation---statistical dependence within a variable due to network ties---is prevalent across various studies where subjects are interconnected through network relationships~\citep{dow1982network, dow1984galton, o2008analysis, peeters2009network, lee2021network}. Numerous statistical methods have been developed to either account for or assess autocorrelation within a single network-dependent variable, typically the primary outcome of interest~\citep{leenders2002modeling, christakis2004social}, often focusing on its relationship with a fixed covariate.  
However, in practice, it is common for both outcome and exposure variables to exhibit network dependence.    
Despite the widespread use of methods for correcting dependence in multiple variables in other settings, 
there has been little discussion of statistical inference when examining relationships among \textit{multiple} network-dependent variables. 

\subsection{Spurious associations due to network dependence}

It is well known that two independently generated time series can appear statistically correlated if they exhibit similar patterns of strong autocorrelation. This is also known as nonsense or volatile correlations in econometrics literature~\citep{yule1926we, ernst2017yule}. In fact, any form of autocorrelation shared by two or more variables can lead to such misleading associations. \cite{lee2021network} demonstrate that two variables sampled from the same network, exhibiting shared dependence structure, may appear statistically correlated, even in the absence of a true association. Following \cite{lee2021network}, we refer to this phenomenon as \textit{spurious associations due to network dependence}. Despite the shared underlying mechanisms behind spurious associations across different fields, statistical approaches to addressing this issue remain discipline-specific. To our knowledge, no existing literature comprehensively examines the applicability of existing methods for disentangling shared dependence in the context of network dependence.

While spurious associations due to network dependence have been routinely unacknowledged in many human subjects studies, accounting for network dependence in statistical models has been widely practiced in network research, particularly when network data are explicitly collected. Such models primarily focus on assessing the presence and extent of autocorrelation~\citep{wang2014statistical,dittrich2020network} or evaluating whether network relationships causally induce correlations among observations~\citep{carr2015network,de2016social}. These analyses do not necessarily require multiple variables beyond the primary outcome of interest, and even when additional variables are included, no further statistical assumptions about them are typically required.
In this work, we focus on cases where the relationships \textit{between} two variables sampled from the same network are of interest, assuming that the presence of autocorrelation in these variables is already known and thus not our focus.

\subsection{Our contribution}

In this work, we examine the utility of two widely used approaches for accounting for autocorrelation in addressing spurious associations due to network dependence. The first approach is pre-whitening methods, which are commonly applied in time series analysis but have seen limited use in network settings, and 
the second approach is the network autocorrelation model, which is not explicitly designed to correct for shared dependence among multiple variables, but rather to address autocorrelation in the outcome variation within regression models. While neither approach was originally developed for the present setting, we demonstrate that either of them can be effective in addressing shared network dependence across multiple variables---though only under specific circumstances related to the presumed network structure.

Our main contributions are as follows. 
First, we identify key differences between network dependence and dependence arising in other contexts, and we highlight important elements that should be considered when modeling network dependence.
Second, we discuss how pre-whitening approaches can be adapted in network settings. We propose a general approach to constructing the dependence structure based on certain statistical assumptions and known network structure and introduce an estimation method that addresses computational challenges due to the complexity of network dependence, illustrated through a concrete example.
Lastly, we examine the assumptions underlying network autocorrelation models and compare them with those used in the pre-whitening approach.
Overall, our work highlights the importance of considering the level of interactions on a network that induces autocorrelation across multiple variables. This is because different approaches for addressing within-variable dependence implicitly assume different levels of interaction, which has important implications for both model specification and interpretation.  

The remainder of this paper is organized as follows. In Section~\ref{sec:pre}, we review autocorrelation in other settings and discuss the unique challenges in network dependence. Section~\ref{sec-premix} introduces the pre-whitening approach and examines its potential application to network dependence. Section~\ref{sec-nam} discusses the network autocorrelation model and examines its underlying assumptions regarding shared dependence.
Section~\ref{sec-sim} presents a simulation study comparing the performance of these two approaches in correcting spurious associations due to network dependence. Finally, we discuss the results and their implications in Section~\ref{sec-dis}.

\section{Preliminaries}\label{sec:pre}

\subsection{Structured autocorrelation in other contexts}

Autocorrelation is common in temporal, spatial, genetic, and network spaces, where the degree of dependence between two observations depends on their respective distance measures. 
In temporal autocorrelation, distance is often defined by time lags on $\mathbb{R}$ between observations, meaning that observations closer in time are more likely to be correlated than those farther apart~\citep{makridakis1994time}. 
In spatial autocorrelation, the Euclidean distance on $\mathbb{R}^{2}$ is often used to quantify dependence between observations~\citep{legendre1993spatial}. In both temporal and spatial autocorrelation, as the number of observations ($n$) increases, the distribution of temporal and spatial distances becomes more variable, leading to an exponential growth in the number of observation pairs with large (relative) distances. As a result, autocorrelation among most pairs decays. This scarcity of strongly autocorrelated pairs plays a crucial role in statistical inference by ensuring statistical independence among most observation pairs or even generating clusters across which no statistical correlations are assumed. 

Genetic autocorrelation, or genetic dependence, is one of the most commonly considered forms of autocorrelation in human subjects research. 
In genome-wide association studies (GWAS), statistical associations between diseases and genetic variants are often of interest, and each is likely to be genetically autocorrelated. For example, parents and their children may have similar disease statuses (not solely due to genetics but also because they share health-related behaviors or environmental factors), and their genetic variants are inherently correlated. Failure to adjust for shared genetic autocorrelation between these two variables can lead to inflated positives in association measures. This is also known as cryptic relatedness or population structure~\citep{astle2009population,sillanpaa2011overview}.
To quantify genetic similarity of relatedness, the kinship matrix measures pairwise genetic relationships within a population~\citep{kang2008efficient,astle2009population,speed2015relatedness,hoffman2013correcting,li2014enrichment,yu2006unified}. Higher values in the kinship matrix indicate closer genetic relationships, while lower values indicate less genetic relatedness. When distinct clusters are not present in the population, the kinship matrix is used to account for shared genetic dependencies among individuals~\citep{kang2008efficient,yu2006unified}. 
This approach assumes that the kinship matrix, often estimated using sample correlations among all single nucleotide polymorphisms (SNPs), directly captures the dependence structure that governs the autocorrelation of each variable (e.g., disease status and genetic variants)~\citep{milligan2003maximum, hayes2009increased, jiang2022unbiased}.

\subsection{Network dependence and its unique challenges}\label{ssec:challenges}

In the contexts above, autocorrelation between two units is largely determined by their distance within each context: the farther apart they are, the less correlated they tend to be. Therefore, once distance is defined or estimated, the degree of autocorrelation can be inferred. In networks, however, while distance within the network space contributes to autocorrelation, there is often no clear consensus on how to define this distance. More importantly, distance alone is not the primary factor governing the degree of autocorrelation in network settings.

Let us first focus on measuring distance in a network. Similar to the kinship matrix, distance on a network space can be measured using an adjacency matrix, denoted by $\mathbf{A} = [A_{ij}]$, where  $A_{ij} = 1$ indicates a tie (or edge) between nodes $i$ and $j$, and $A_{ij} = 0$ otherwise, for $i,j=1,2,\ldots, n$. 
One straightforward way to determine the dependence structure (e.g., variance-covariance matrix) is to define it as $\text{Corr}(Y_{i}, Y_{j}) = \sigma A_{ij}$ when $\bA$ is symmetric, or $\text{Corr}(Y_{i}, Y_{j}) = \sigma(A_{ij}+A_{ji})/2$ when $\bA$ is asymmetric, where $0 \leq \sigma \leq 1$. This assumes the presence of dependence if and only if nodes $i$ and $j$ are adjacent (e.g., being friends with each other) in a network, with the strength of dependence potentially reflecting the reciprocity of their relationships. This approach is similar to adjustments using the kinship matrix in GWAS.

Another way to define the distance between two nodes is by the geodesic distance, which counts the number of ties along the shortest paths between them from $\mathbf{A}$, and to determine the degree of dependence as the inverse of their geodesic distance (e.g., \cite{macdonald2011school,agneessens2017geodesic}). Under this definition, dependence exists for all adjacent pairs at the same amount, and as long as two nodes are connected, they are not statistically independent. This can result in a non-sparse variance-covariance structure, which, unlike temporal and spatial dependence structures---where distance is typically measured in Euclidean space---is not easily resolved by increasing the number of observations in network data. As researchers include a larger number of observations, pairs of nodes are more likely to be connected, and their distances in network space become shorter. This non-Euclidean nature renders network dependence more complicated, and it becomes challenging to guarantee asymptotic properties of many statistics developed under certain (conditional) independent structures. As a result, methods or models that are developed to accommodate temporal or spatial dependence may not be appropriate for, or at least cannot be adapted directly to network dependence. 

Dependence induced by a network is often more complex than simply the strength of network ties dictated by $\bA$: unlike other dependence settings, where the degree of dependence tends to decay smoothly with increasing distance, network dependence is arbitrary and context-specific. For example, compared to adjacent pairs $i$ and $j$ with $A_{ij} =1$, a pair of $i$ and $k$ with $A_{ik} = 0$ may exhibit higher autocorrelation in certain variables if nodes $i$ and $k$ share a larger number of mutual friends than nodes $i$ and $j$ do. For this reason, unlike genetic dependence, whose structure is directly determined by the kinship matrix, network dependence is often not solely informed by an adjacency matrix but is also heavily influenced by the underlying \textit{transmission process} through which a variable of interest propagates among nodes via network ties.

\subsection{Transmission process through network ties}

In addition to distance derived directly from $\mathbf{A}$---which is obtained independently of context related to the variables of interest---network dependence in each variable is influenced by the transmission or contagion process, which is heavily dependent on the type of variable. This process refers to the mechanism by which a variable of interest (e.g., opinions, behaviors, or attributes of nodes) propagates through network ties, thus generating autocorrelation among observations.  
Even when $\mathbf{A}$ is correctly specified (i.e., context-free network distance is accurate), misspecifying the pattern through which variables spread across the network can lead to invalid statistical or causal inferences~\citep{hays2010spatial, an2018causal,  lee2023finding}. {In this paper, we focus on the transmission processes \textit{within} a variable to highlight the specific assumptions about the transmission process on a network space that underlie existing approaches for addressing autocorrelation.} For simplicity, we illustrate the transmission process in $\mathbf{Y}$, assuming the same process applies to other variables (e.g., $\mathbf{X}$) of interest. We also consider the case where the transmission process within one variable is not affected by others, and any associations \textit{between} the variables are the focus of our inference and should not be adjusted. 

\begin{figure*}[!ht]
\centering
    \begin{subfigure}[b]{0.32\textwidth}
     \centering
     \includegraphics[width=\textwidth]{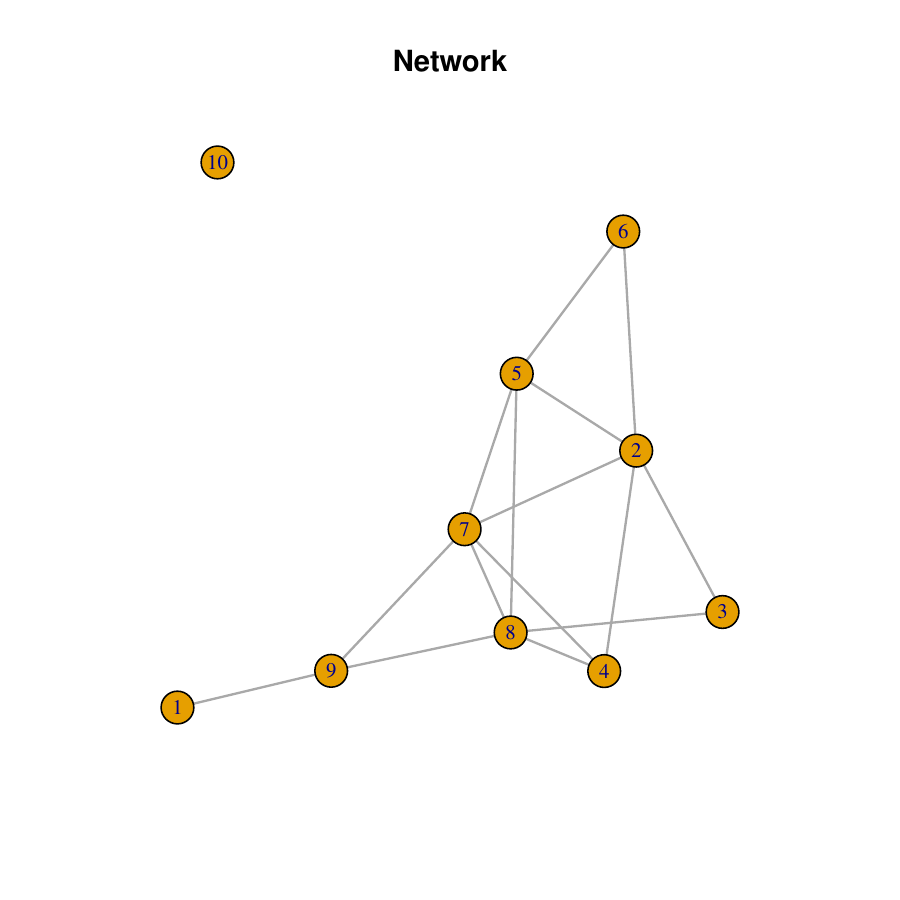} \vspace{-1.5cm}
     \caption{} 
      \label{fg:network}
    \end{subfigure}
    \begin{subfigure}[b]{0.32\textwidth}
     \centering
     \includegraphics[width=\textwidth]{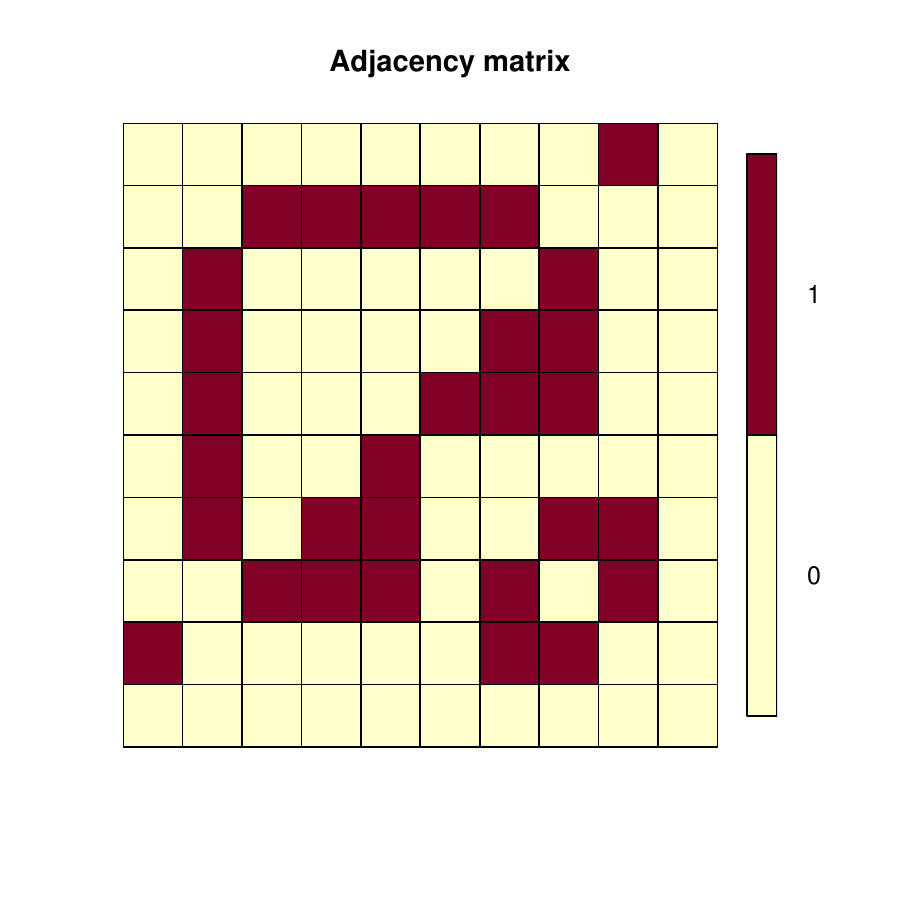} \vspace{-1.5cm}
     \caption{}  
     \label{fg:adj}
    \end{subfigure}
     \begin{subfigure}[b]{0.32\textwidth}
     \centering
     \includegraphics[width=\textwidth]{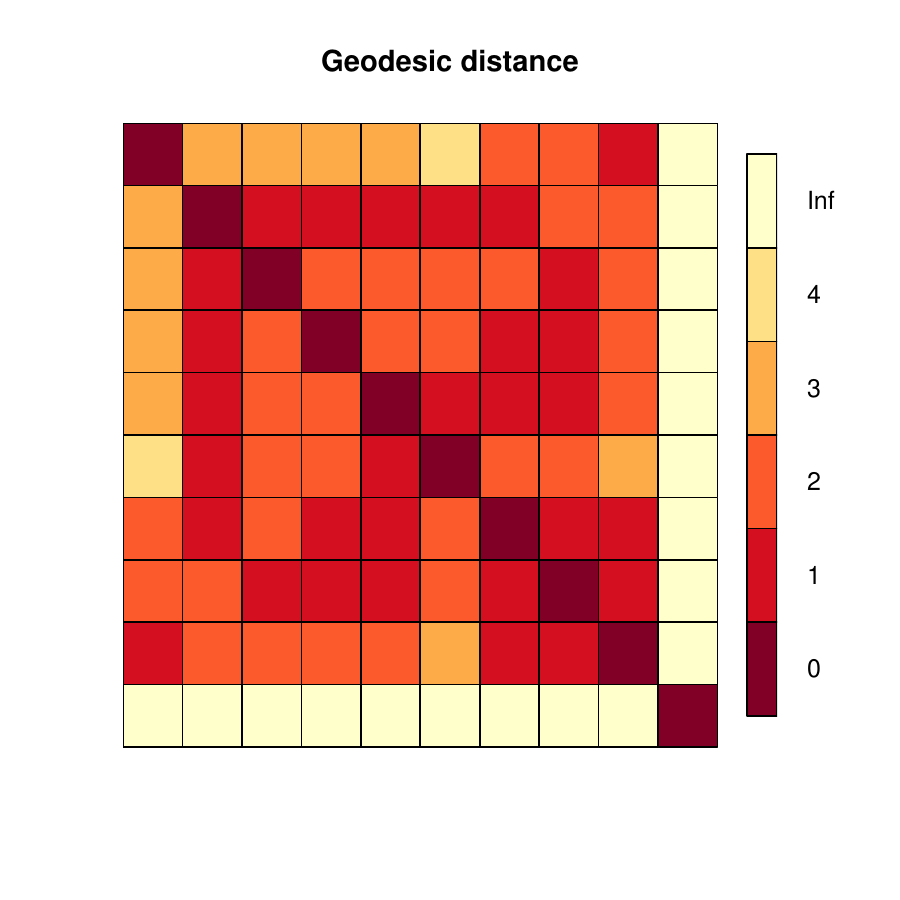} \vspace{-1.5cm}
     \caption{} 
     \label{fg:geodesic}
    \end{subfigure}
     \begin{subfigure}[b]{0.32\textwidth}
     \centering
     \includegraphics[width=\textwidth]{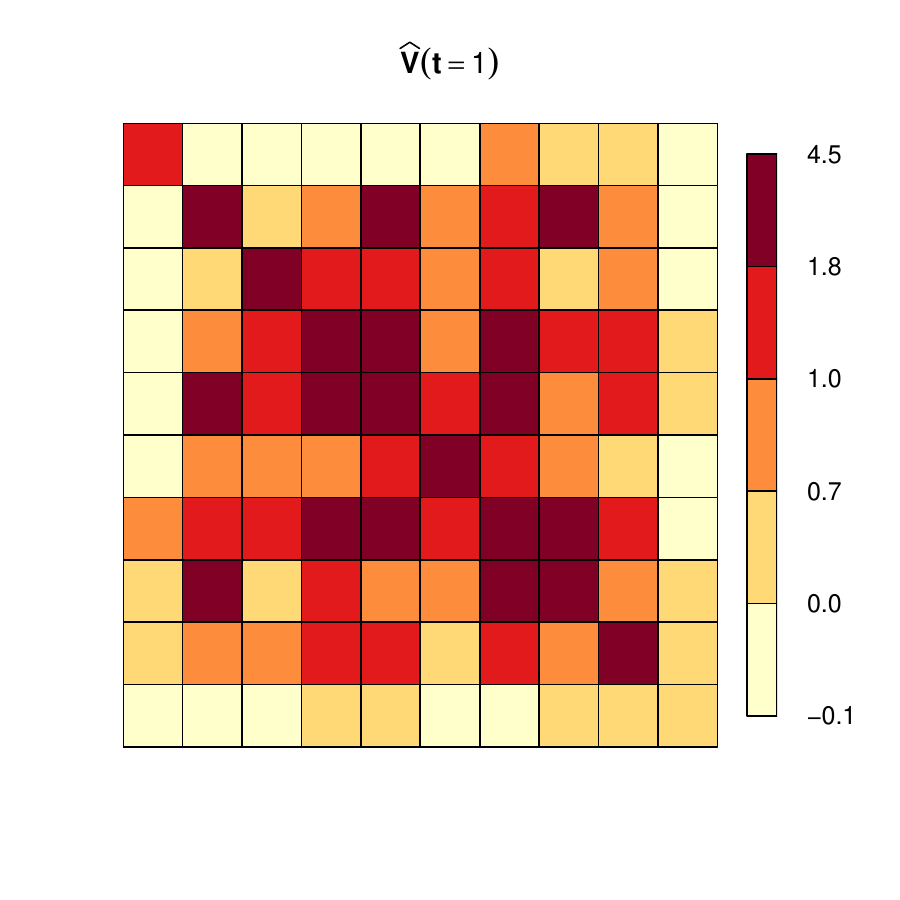} \vspace{-1.5cm}
     \caption{}  
     \label{fg1:vsampleT1}
    \end{subfigure}
    \begin{subfigure}[b]{0.32\textwidth}
     \centering
     \includegraphics[width=\textwidth]{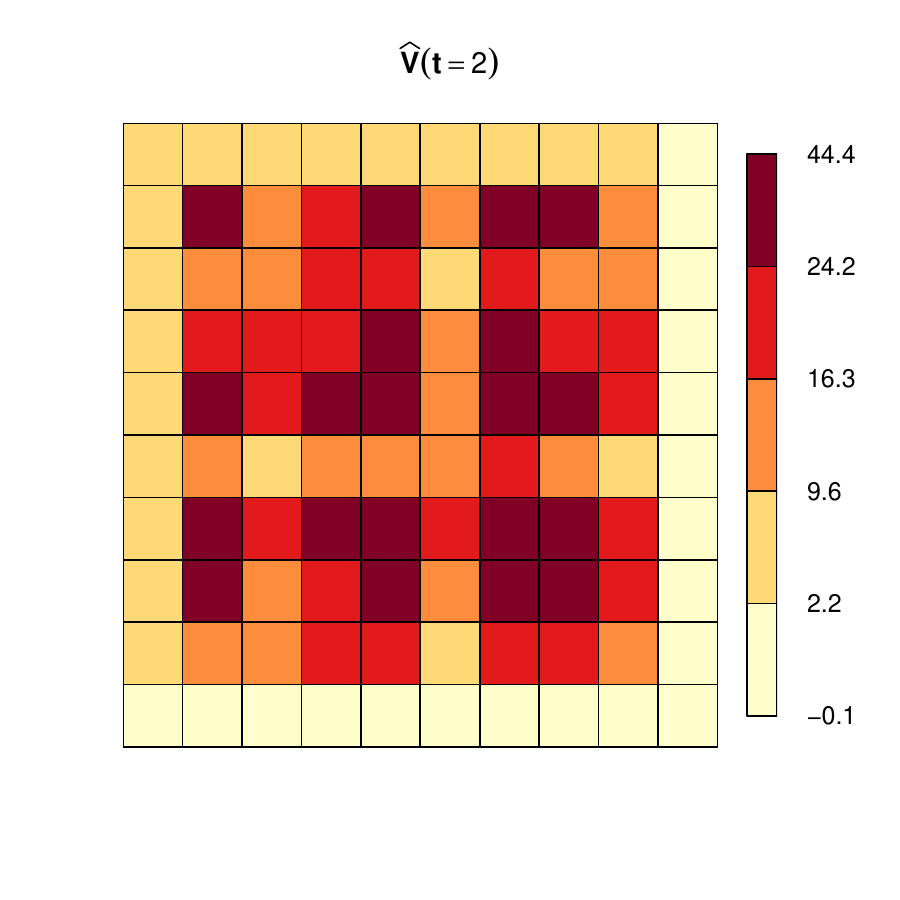} \vspace{-1.5cm}
     \caption{}
     \label{fg:vsampleT2}
    \end{subfigure}
    \begin{subfigure}[b]{0.32\textwidth}
     \centering
     \includegraphics[width=\textwidth]{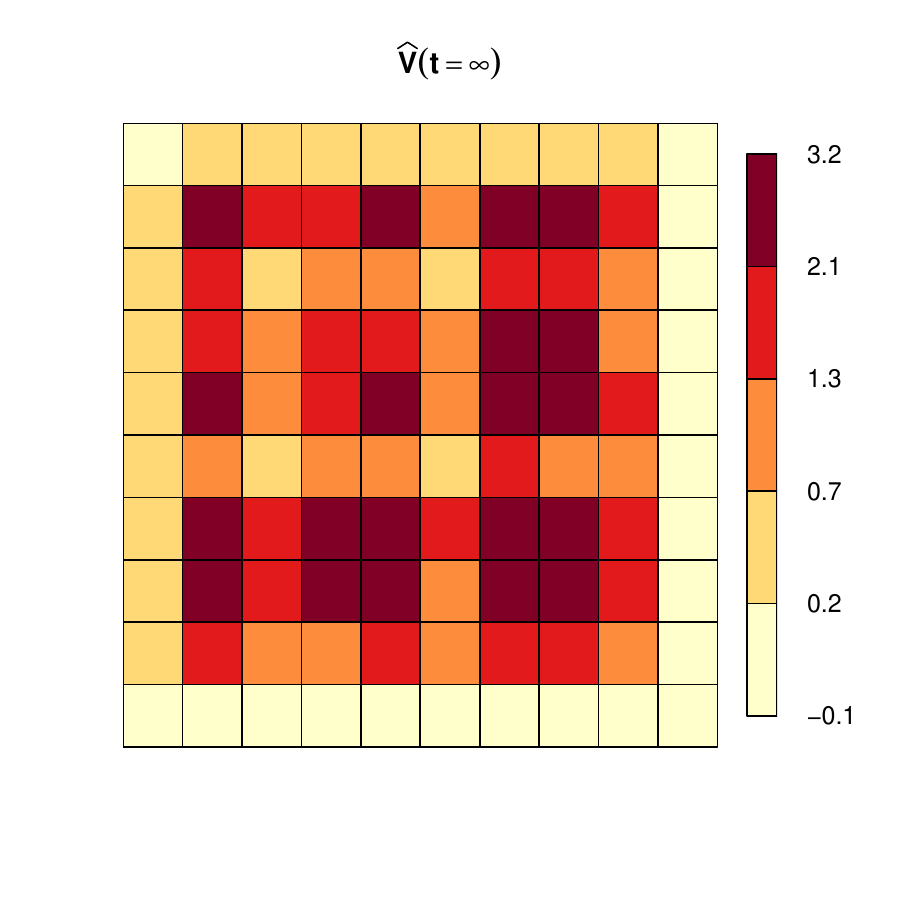} \vspace{-1.5cm}
     \caption{}  
     \label{fg:vsampleTinfty}
    \end{subfigure}
    \caption{ \label{fig:figure1} Visualization for (a) a network with 10 nodes, (b) their adjacency matrix,  
    (c) their geodesic distance matrix. 
    Sample variance-covariance matrix based on 1000 independent replicates under the direct transmission process at (d) $t=1$, (e) $t=2$, and (f) $t = 10000$.}
\end{figure*}

{We introduce one class of transmission processes to illustrate the potential discrepancy between the structure of $\mathbf{A}$ and the dependence structure within one variable, as captured by the variance-covariance matrix of $\bV$}: the direct transmission process. This transmission process provides a general model in which each node's value of a variable is influenced by its adjacent neighbors over ``time'' in a longitudinal way. Here, the time unit should be sufficiently small to capture the influence from one node to other, assuming that influence occurs only during discrete interactions~\citep{ogburn2018challenges, lee2020testing, lee2021network, ogburn2024causal}.   Let $Y^0_i$ denote the outcome value of node $i$ at baseline before initiating any interactions ($i=1,\ldots, n$). Since no interactions have occurred yet, these are independent across $i$. At the subsequent time point ($t=1$), each node has the opportunity to influence its adjacent nodes. Therefore, $Y^{1}_{i}$ may be influenced by the baseline values of its adjacent peers. {As one example}, $Y^1_i$ can be represented as a weighted average of its adjacent peers' baseline values, in addition to its own baseline value: 
\begin{eqnarray*}
Y^1_i = \kappa\sum_{j\neq i}A_{ij}Y^{0}_j + \alpha Y^0_i+ \epsilon_i, 
\end{eqnarray*}
where $\kappa$ and $\alpha$ are scalars, and $\epsilon_i$ is independent random noise. {As this process continues over discrete time points, generating dependencies through multiple paths between nodes, the amount of dependence between two connected nodes $i$ and $j$, e.g., $\text{Corr}(Y^t_i,Y^t_j)$, reflects the cumulative influence from many different paths connecting $i$ and $j$.} {Such a direct transmission process that generates dependencies in the variable through network ties over discrete time points has been considered in various fields of studies, including infectious diseases~\citep{keeling2005networks,wang2018characterizing}, economics~\citep{banerjee2013diffusion,bailey2022peer}, public health~\citep{christakis2007spread,aral2017exercise}, and education~\citep{dishion2011peer,cascio2016first}. 
 
}

Figure~\ref{fig:figure1} demonstrates network distances and the variance-covariance matrices induced by a direct transmission process of multiple $t$ using a toy example of a network with ten nodes. Figures~\ref{fg:adj} and~\ref{fg:geodesic} present their adjacency matrix $\bA$ and pairwise geodesic distance, respectively. While each component in $\bA$ is binary, pairwise geodesic distances are more variable among the connected nodes; only an isolated node (node 10 in Figure~\ref{fg:network}) has an infinite distance to other nodes. 
We generate network dependent vector $\mathbf{Y}^t = (Y^t_{1}, \ldots, Y^t_{10})$ from the direct transmission process where the initial values are $Y^0_i \overset{\text{i.i.d}}{\sim}\mathcal{N}(0,1)$ for $i = 1,\cdots,10$. Figures~\ref{fg1:vsampleT1},~\ref{fg:vsampleT2}, and~\ref{fg:vsampleTinfty} present the sample variance-covariance matrix of $\bY^t$ based on independent 1000 replicates for ${t=1}$, ${t=2}$, and ${t=10000}$ (denoted as $t = \infty$), respectively.  At $t=1$, in addition to adjacent pairs, nodes with a geodesic distance of $2$ (e.g., node $1$ and $7$) are shown to be correlated. At $t=2$ where the influence propagates over two distances (from a node to a node to a node), almost all pairs of connected nodes exhibit dependence. After evolving for a long time (e.g., $t=10000$ possible interactions through network ties), the dependence becomes unstructured, making it challenging to capture it only through $\bA$.\footnote{The scales increase substantially from Figures~\ref{fg1:vsampleT1} to~\ref{fg:vsampleT2} as the value of $\kappa$ is relatively larger, resulting in exponential growth of $\bY^t$, even with only $t=2$. To ensure the process reaches a steady state after a long run ($t = 10000$ in this case) and to prevent the values of $\bY^t$ from growing exponentially over time, we adjusted the parameters of the process, resulting in smaller scales in Figure~\ref{fg:vsampleTinfty} compared to Figure~\ref{fg:vsampleT2}.} These results highlight the structural differences between $\bA$ and $\bV$, even in the simple case.

\subsection{{Temporal dynamics in network and time series data}}

The temporal dynamics that generate network dependence are fundamentally different from the dynamics that generate temporal autocorrelation in time series data. In network settings, 
researchers typically observe one (or at most, a few) snapshots of autocorrelated outcomes for $n$ nodes in a network after the nodes have interacted over a \textit{specific} time period,  say $t=t_{0}$, denoted as $\mathbf{Y}^{t_{0}} = (Y^{t_{0}}_{1}, Y^{t_{0}}_{2}, \ldots, Y^{t_{0}}_{n})$. From now, we refer to this time duration of interactions as the \textit{level of interactions}. Here, network autocorrelation refers to the correlation between two nodes, e.g., $\text{Corr}(Y^{t_{0}}_{i}, Y^{t_{0}}_{j})$. 
In contrast, temporal data often involves the collection of time series $\{ (Y^{t=1}_{i}, Y^{t=2}_{i}, \ldots, Y^{t=T}_{i}): i=1,2,\ldots,n \}$, where the autocorrelation between two time points, e.g., $\text{Corr}(Y^{t}, Y^{t-k})$ refers to temporal autocorrelation, which can be inferred from observations across multiple subjects ($n$).

{When modeling a temporal dependence pattern, \textit{stationarity}---assuming consistent temporal dependence patterns over time---often plays a crucial role, It allows the dependence structure to be captured with relatively few parameters, typically as a function of temporal distance between two time points. For example, under stationarity, $\text{Corr}(Y^t, Y^{t-k})$ depends only on the lag $k$, not on the specific time point $t$. 
In network settings, the dependence between two nodes can become (nearly) independent of the specific number of transmission or interaction steps (e.g., $t$) once that process reaches \textit{equilibrium}. Equilibrium refers to the state in which the dependence structure among nodes stabilizes. At equilibrium, network dependence, like temporal dependence under stationarity, can often be modeled using the network structure (e.g., an adjacency matrix) and a relatively small number of parameters. While equilibrium and stationarity represent different concepts, both allow for simplified modeling of dependence structures through stable, underlying patterns in temporal dynamics. However, unlike time series, determining whether network data has reached equilibrium relies on the researcher’s conjecture or prior knowledge, as there is often insufficient data on network dynamics.}

The implications of approaches for accounting for shared network dependence should critically depend on the assumed nature of transmission process that underlies dependence structures.
We first introduce the pre-whitening approach, which can be applicable when the dependence can be captured through a specific level of interactions (e.g., a specific $t_{0}$). We then introduce the network autocorrelation model, which is more appropriate when the network is assumed to have reached an equilibrium state.

\section{Pre-whitening approaches}\label{sec-premix}

\subsection{Specification of the variance-covariance matrix in pre-whitening approach}

The pre-whitening approach was developed specifically to address shared dependence between variables, which has been dominantly used in temporal autocorrelation~\citep{smith2007comment,bullmore1996statistical,yue2024iterative}. 
Pre-whitening often involves estimating the variance-covariance matrix of $\bV$ using (often many) variables that are believed to share the same dependence structure with $\bX$ and $\bY$. 
Then each of autocorrelated variables $\bX$ and $\bY$ is transformed into $\widehat\bV^{-1/2}\bX:= \bX^*$ and $\widehat\bV^{-1/2}\bY := \bY^*$, respectively. 
This whitening procedure can remove shared dependence in the data by rendering elements each in the pre-whitened variables $\bX^{*}$ and $\bY^{*}$ (possibly asymptotically) statistically independent given that $\bV$ is correctly specified. With $\bX^*$ and $\bY^*$, standard statistical analyses (e.g., fitting a simple linear regression) can be conducted. 

The validity of the pre-whitening approach heavily depends on the specification of $\bV$. 
Researchers across various fields have developed different approaches based on the dependence structure they aim to address, including temporal dependence in the context of time series data in neuroimaging studies (e.g., \citep{lund2006non,luo2020improved}), genetic dependence in the context of GWAS (e.g., \citep{hoffman2013correcting,zhou2012genome}). These approaches may be adapted to network settings if their assumed dependence structures are considered appropriate for modeling network dependence.

Without specifying any transmission process or any level of interactions that result in the dependence structure, one may attempt to estimate $\bV$ directly through the data. However, in a network, estimating $\bV$ of one or a small number of variables is particularly challenging, especially when only one snapshot of autocorrelated variables is observed. A more practical way is to construct $\bV$ through an adjacency matrix $\bA$, which is similar to using the kinship matrix in GWAS.
As non-adjacent pairs have zero elements in $\bA$, it captures only first-order network dependence. Here, ``first-order'' refers to the dependence resulting from paths of length one that connect two nodes (i.e., direct connection). 
The following sections discuss specifications of $\bV$ using higher-order ($>$1) powers of $\bA$ to characterize dependence beyond direct adjacency.

\subsection{Possible specifications under the direct transmission process}\label{sec:pwTonet}

While the relationship between $\bA$ and $\bV$ can be generally very complex, we provide the following Theorem to establish their relationship under the given level of interactions in a very specific direct transmission process. The result can then be used to pre-whiten the variables in network settings.

\begin{theorem}[Variance-covariance matrix in a direct transmission process]\label{thm:theorem1}
   Suppose that 
(i) $\bA$ is symmetric, and  
(ii) $Y^0_i\overset{\text{i.i.d}}{\sim}\mathcal{N}(0,1)$ for $i=1,\ldots, n$. 
Then when $\bY^t = (Y^t_1,\cdots,Y^t_n)$ follows the direct transmission process given by:
       \begin{equation}~\label{directMatrix}
           \bY^t = \kappa\bA\bY^{t-1} + \alpha \bY^{t-1}  + \bepsilon^t, \quad t = 1, 2, \cdots, T,
       \end{equation}
       where $\kappa \in \mathbb{R}$ and $\alpha\in \mathbb{R}$, $\bepsilon^{t} = (\epsilon^t_1,\cdots,\epsilon^t_n), \epsilon^t_i\overset{\text{i.i.d}}{\sim}\mathcal{N}(0,1)$, for $i = 1, \cdots, n$, 
 the variance-covariance matrix $\bV$ of $\mathbf{Y}^{t=1}$ is given by:
 \begin{equation}\label{eq:varTheorem1}
     \bV = \text{Var}(\bY^{t=1}) = \kappa^2\bA^2 + 2\alpha\kappa\bA + (\alpha^2+1)\bI,
 \end{equation}
    where $\bI$ is the $(n \times n)$ identity matrix.
\end{theorem}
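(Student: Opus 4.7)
The plan is to exploit the fact that Theorem~\ref{thm:theorem1} concerns only the one-step case ($t=1$), so the recursion collapses into a single linear transformation of independent Gaussian vectors. First I would write out equation~\eqref{directMatrix} at $t=1$ and factor, obtaining
\begin{equation*}
\bY^1 = \kappa\bA\bY^0 + \alpha \bY^0 + \bepsilon^1 = (\kappa\bA + \alpha \bI)\bY^0 + \bepsilon^1.
\end{equation*}
Under assumption (ii), $\bY^0 \sim \mathcal{N}(\mathbf{0}, \bI)$, and by construction $\bepsilon^1 \sim \mathcal{N}(\mathbf{0}, \bI)$ is independent of $\bY^0$.

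Next I would apply the standard linear-transformation rule for covariance: for a deterministic matrix $\bM$ and independent random vectors $\bY^0, \bepsilon^1$ with identity covariance,
\begin{equation*}
\text{Var}\bigl(\bM\bY^0 + \bepsilon^1\bigr) = \bM\,\text{Var}(\bY^0)\,\bM^\T + \text{Var}(\bepsilon^1) = \bM\bM^\T + \bI.
\end{equation*}
Setting $\bM = \kappa\bA + \alpha \bI$ and invoking assumption (i) that $\bA$ is symmetric (so $\bM^\T = \bM$), this reduces to $(\kappa\bA + \alpha\bI)^2 + \bI$.

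Finally I would expand the square, using that $\bA$ commutes with $\bI$, to obtain
\begin{equation*}
(\kappa\bA + \alpha\bI)^2 + \bI = \kappa^2\bA^2 + 2\alpha\kappa\bA + \alpha^2\bI + \bI = \kappa^2\bA^2 + 2\alpha\kappa\bA + (\alpha^2+1)\bI,
\end{equation*}
which matches \eqref{eq:varTheorem1}. There is no substantive obstacle here: the proof is a direct computation because the statement only characterizes $\text{Var}(\bY^{t=1})$ rather than $\text{Var}(\bY^{t})$ for general $t$. The symmetry assumption on $\bA$ is used solely to convert $\bM\bM^\T$ into $\bM^2$; the Gaussianity of $\bY^0$ and $\bepsilon^1$ is technically stronger than needed, as only mean zero and identity covariance are invoked in the computation. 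The one thing to be careful about is ensuring the cross term vanishes, which follows from independence of $\bY^0$ and $\bepsilon^1$.
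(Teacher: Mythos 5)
Your proof is correct and follows essentially the same route as the paper's: factor the $t=1$ update as $(\kappa\bA+\alpha\bI)\bY^0+\bepsilon^1$, apply the linear-transformation rule for covariance, and use symmetry of $\bA$ to expand $(\kappa\bA+\alpha\bI)^2+\bI$. Your added remarks that independence kills the cross term and that Gaussianity is stronger than needed are accurate but do not change the argument.
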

While the direct transmission process at $t=1$ in~\eqref{directMatrix} appears to generate autocorrelation only between adjacent pairs, Theorem~\ref{thm:theorem1} demonstrates that $\bV$ is, in fact, related to $\bA$ up to its order two. Consider a network with three nodes, $i,j$ and $l$, where $i$ and $j$ are connected only through $l$, i.e., $A_{il}A_{lj} = 1$ but $A_{ij} = 0$. At $t= 1$ in the direct transmission process, although $i$ and $j$ have not had direct interaction, $Y^{t=1}_i$ and $Y^{t=1}_j$ are statistically correlated through $Y^{t=1}_l$. In this way, the order of $\bA$ would exponentially---not linearly---increase as the number of interactions allowed between adjacent pairs increases. A proof of Theorem~\ref{thm:theorem1} is provided in Appendix~\ref{sec:proof}.

\begin{figure}[!ht]
\centering
    \begin{subfigure}[b]{0.4\linewidth}
     \centering
     \includegraphics[width=\linewidth]{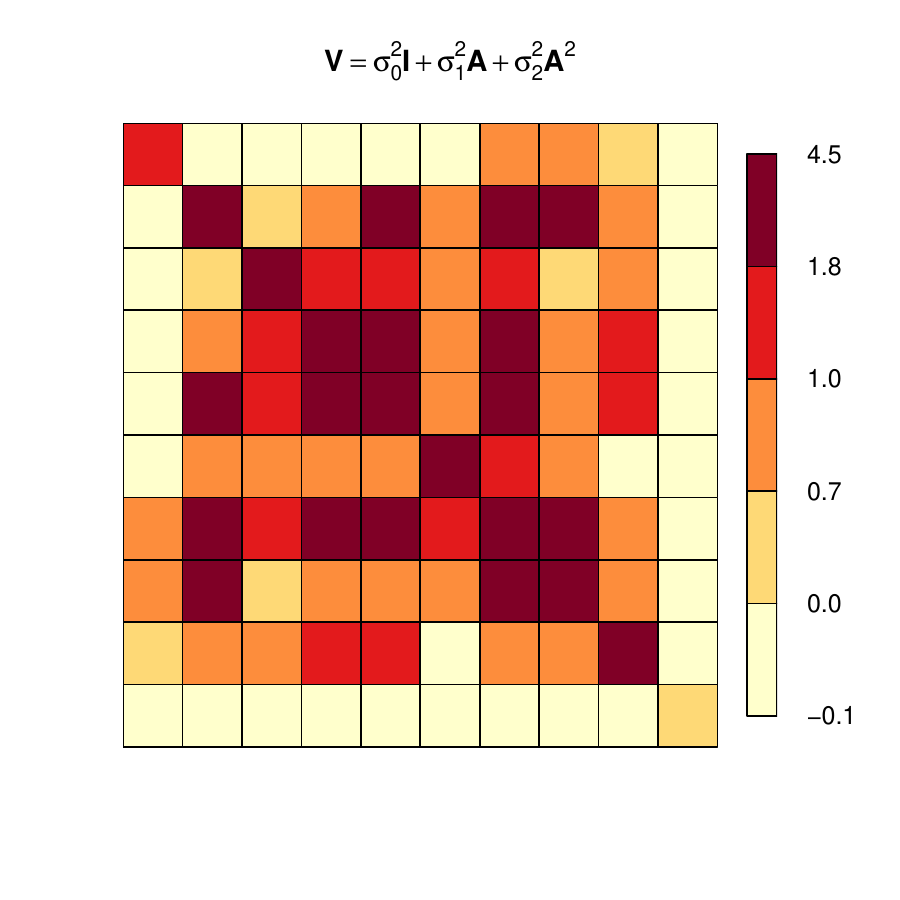}
     \vspace{-1.5cm}
     \caption{}  
    \label{fg:vtrueT1}
    \end{subfigure}
     \begin{subfigure}[b]{0.4\linewidth}
     \centering
     \includegraphics[width=\linewidth]{plots/Sample_t1.pdf}
    \vspace{-1.5cm}
     \caption{}  
     \label{fg:vsampleT1}
    \end{subfigure}
    \caption{ \label{fig:figure2} 
    (a) The variance-covariance matrix $\bV$; (b) the sample variance-covariance matrix under the direct transmission process at $t = 1$.}
\end{figure}

Figure~\ref{fg:vtrueT1} presents $\bV$ in~\eqref{eq:varTheorem1} using the parameters that generate the direct transmission process for the toy example introduced in Figure~\ref{fig:figure1}. Figure~\ref{fg:vsampleT1} presents the sample variance-covariance matrix of $\bY^{t=1}$ based on 1000 independent replicates (same as Figure~\ref{fg1:vsampleT1}), which exhibits a dependence structure very similar to that specified in $\bV$ in Figure~\ref{fg:vtrueT1}. This agreement between theoretical and empirical variance-covariance matrices is consistent with our results in Theorem~\ref{thm:theorem1}.

\begin{remark}
If the adjacency matrix is asymmetric, capturing the full dependence structure may require considering high-order adjacency matrix, its transpose matrix $\bA^\T$, and their product $\bA^{d_1}{\bA^\T}^{d_2}$, where $d_1$ and $d_2$ are scalars. For example, when $\bA$ is asymmetric, the first term in the variance-covariance matrix in Theorem~\ref{thm:theorem1} is given by $\kappa^2\bA\bA^\T$.
\end{remark}

\subsection{General specifications of the variance-covariance structure in a network}

While the results in Theorem~\ref{thm:theorem1} are helpful for understanding the relationship between $\bA$ and $\bV$ within a certain model and a fixed level of interaction ($t_{0} = 1$), they may appear valid only under restrictive conditions. 
Without necessarily assuming a specific form of the transmission process, it is also an option to construct $\bV$ through the higher-order linear combination of $\bA$ given a certain order $d$. For example:
\begin{equation}\label{eq:mixVariance}
    \bV = \sigma^2_0 \bI +  \sigma^2_1\bA + \cdots \sigma^2_d \bA^d,
\end{equation}
where $\sigma^2_0$ represents the variance of independent component, each term $\bA^m $ captures \textit{$m$-order dependence} resulting from length-$m$ paths that connect two nodes, with $\sigma^2_m$ indicating the strength of dependence. 
The value of $\bA^m_{ij} =g$ indicates that there exist $g$ number of length-$m$ paths connecting nodes $i$ and $j$. The variance-covariance matrix defined in~\eqref{eq:mixVariance} assumes that each length-$m$ path connecting nodes $i$ and $j$ contributes to $\text{Cov}(Y_{i}, Y_{j})$ with the same strength of $\sigma^2_m$. However, the impact of $m$-order dependence on $\bV$ may vary across different orders of $m$ ($=1,\ldots,d$). In many cases, it is reasonable to conjecture that dependence induced by nodes farther apart in the path would be negligible~\citep{fowler2010cooperative,papachristos2014network,soininen2007distance,parkinson2018similar}. 

The choice of $d$ is subjective, and researchers may determine the value of $d$ based on their substantive knowledge of how information propagates through network ties and to what extent.  
With the underlying transmission process unknown, empirical evidence on social influence can guide the choice of $d$~\citep{milgram1967small,christakis2009connected,fowler2010cooperative,backstrom2012four}.  
The \textit{six degrees of separation} rule~\citep{milgram1967small,travers1977experimental} suggests that any two individuals in a social network are connected through at most six steps, indicating that setting $d$ to a finite number (e.g., 10) may be sufficient to capture indirect dependence in many networks, regardless of the dynamic of the underlying transmission process. Additionally, the \textit{three degrees of influence} rule~\citep{christakis2009connected,fowler2010cooperative} further states that influence tends to diminish beyond three steps, implying the dependence due to transmission process may be at most significant within $t=3$. In the context of a direct transmission process, these two rules align with each other, as the variance structure in $t=3$ involves adjacency relationships up to $\bA^6$, which exactly corresponds to the six degrees of separation. Furthermore, network studies specifically on social media suggest that the degrees of separation may be even lower ($d=4$)~\citep{backstrom2012four,daraghmi2014we}. This implies if the underlying network structure is denser and the distribution of distances is more right-skewed, a smaller $d$ may be more appropriate, {while the specific transmission processes that result in $\bV$ in~\eqref{eq:mixVariance} for $d >2$ remain unknown to us}.

\subsection{An example of pre-whitening example using linear mixed models}\label{ssec:exlmixed}

In this section, we provide an example of the pre-whitening approach, where the variance-covariance matrix is estimated using a linear mixed model. 
To address the shared genetic dependence, the linear mixed model has been used to represent shared dependence structure by random effects in several studies~\citep{kang2008efficient,kang2010variance,zhou2012genome,sul2018population,lippert2011fast}. More recently, it has also been applied specifically to address the spurious associations due to network dependence~\citep{landon2018patient, matthews2024evaluation}. For example,
the linear mixed model of $\bX$ and $\bY$ can be given by:
\begin{equation}\label{eq:egmix}
    \bY = \beta \bX + \bu + \bepsilon, 
\end{equation}
where $\bu$ represents the random effect associated with network dependence, $\bepsilon\sim \mathcal{N}(0,\sigma^2_0\bI)$ denotes the error term, and $\beta$ is the fixed effect for relationship between $\bX$ and $\bY$. We assume $\bu \sim \mathcal{N}(0,\bV_u)$, where $\bV_u = \sigma^2_1\bA +\cdots + \sigma^2_d\bA^d$ and hence $\bY \sim \mathcal{N}(\beta \bX , \bV)$, where $\bV$ is given in~\eqref{eq:mixVariance}. Although unstructured forms of $\bV_u$ and $\bV$ can be used in \eqref{eq:egmix}, specifying $\bV$ as in \eqref{eq:mixVariance} can substantially reduce the dimension of parameter space, and also can reflect substantive knowledge about transmission dynamics. 
To estimate the unknown parameters, the regression coefficient, and the variance components ($\sigma^2_0,\cdots,\sigma^2_d$), the maximum likelihood estimators can be obtained by maximizing the following log-likelihood function.
\begin{equation}\label{eq:mixLM}
    \begin{split}
          & l(\bY, \bX,\beta,\sigma^2_0,\sigma^2_1, \cdots, \sigma^2_d)  =  -\frac{n}{2}\ln(2\pi)-\frac{1}{2}\ln|\bV| \\&- \frac{1}{2} (\bY -\beta \bX)^\T (\bV)^{-1}(Y-\beta \bX).\\
    \end{split}
\end{equation}
Even numerically maximizing the above log-likelihood function often faces challenges, as it involves computing the inverse of updated $\bV$ in each iteration. The following Theorem simplifies the optimization procedure by spectral decomposition under certain conditions, a similar approach to~\citep{sul2018population} but with a more general structure of $\bV$ as in~\eqref{eq:mixVariance}.
\begin{theorem}[Maximum likelihood estimation with efficient optimization] \label{thm:theorem2}
Assuming $\bA$ is symmetric and $\bY \sim \mathcal{N}(\beta\bX, \bV)$, where $\bV$ is given in~\eqref{eq:mixVariance},
the maximum likelihood estimators for $(\beta,\sigma^2_0,\sigma^2_1,\cdots,\sigma^2_d)$ in~\eqref{eq:egmix}
are obtained by maximizing the following log-likelihood function:
    \begin{equation*}\label{eq:simmixLM}
    \begin{split}
            &l(\bY, \bX,\beta,\sigma^2_0,\sigma^2_1, \cdots, \sigma^2_d)   = -\frac{n}{2}\ln(2\pi) \\&-\frac{1}{2}\sum^n_{i=1} \ln(\sigma^2_0 + \sigma^2_1 \lambda_i + \cdots +  \sigma^2_d \lambda^d_i) \\ & - \frac{1}{2} \bZ^\T(\sigma^2_0 \bI + \sigma^2_1\bD + \cdots + \sigma^2_d \bD^d)^{-1}\bZ,
    \end{split}
    \end{equation*}
    where $\bZ = \bU^\T(\bY-\beta\bX)$, $\bA = \bU\bD\bU^\T$, $\bU$ is the orthogonal matrix of eigenvectors of $\bA$ and \\
    $\bD =\text{diag}[\lambda_1,\lambda_2,\cdots,\lambda_n]$ is a diagonal matrix of eigenvalues of $\bA$.
\end{theorem}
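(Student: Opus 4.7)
The plan is a direct spectral-decomposition argument, using the fact that every power of a symmetric matrix shares the same eigenvectors as the matrix itself. This lets us diagonalize $\bV$ in closed form and simplify both the log-determinant and the quadratic form in the log-likelihood \eqref{eq:mixLM}.

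First, since $\bA$ is symmetric, the spectral theorem gives an orthogonal matrix $\bU$ (so $\bU^\T\bU = \bU\bU^\T = \bI$) and a real diagonal matrix $\bD = \text{diag}[\lambda_1,\ldots,\lambda_n]$ with $\bA = \bU\bD\bU^\T$. I would then observe, by induction on $m$, that $\bA^m = \bU\bD^m\bU^\T$ for every nonnegative integer $m$, and that $\bI = \bU\bU^\T$. Substituting these identities into the definition \eqref{eq:mixVariance} of $\bV$ and factoring yields
\begin{equation*}
\bV \;=\; \bU\bigl(\sigma^2_0 \bI + \sigma^2_1 \bD + \cdots + \sigma^2_d \bD^d\bigr)\bU^\T,
\end{equation*}
which is a conjugation of a diagonal matrix by an orthogonal matrix.

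Next I would compute the two terms in the Gaussian log-likelihood. For the log-determinant, using $|\bU||\bU^\T| = 1$ together with the fact that the determinant of a diagonal matrix is the product of its entries gives
\begin{equation*}
\ln|\bV| \;=\; \sum_{i=1}^n \ln\bigl(\sigma^2_0 + \sigma^2_1 \lambda_i + \cdots + \sigma^2_d \lambda^d_i\bigr),
\end{equation*}
provided all the bracketed quantities are positive (a regularity condition I would note as being implicit in requiring $\bV$ to be a valid covariance matrix). For the quadratic form, I would invert $\bV$ using orthogonality of $\bU$, obtaining $\bV^{-1} = \bU(\sigma^2_0\bI + \sigma^2_1\bD + \cdots + \sigma^2_d\bD^d)^{-1}\bU^\T$, and then set $\bZ := \bU^\T(\bY - \beta\bX)$ so that
\begin{equation*}
(\bY-\beta\bX)^\T \bV^{-1} (\bY-\beta\bX) \;=\; \bZ^\T\bigl(\sigma^2_0\bI + \sigma^2_1\bD + \cdots + \sigma^2_d\bD^d\bigr)^{-1}\bZ.
\end{equation*}
Combining these two simplifications into \eqref{eq:mixLM} yields the stated log-likelihood. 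Since the spectral decomposition of $\bA$ is computed once and the inverse appearing above acts on a diagonal matrix, each likelihood evaluation during optimization now costs only $O(n)$ operations in the variance components, rather than requiring a fresh $O(n^3)$ inversion of $\bV$ at every step.

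Honestly, there is no real obstacle here: the entire theorem is a bookkeeping exercise that leverages simultaneous diagonalizability of $\bI,\bA,\bA^2,\ldots,\bA^d$. The only subtlety worth flagging in the write-up is the need to ensure the diagonal entries $\sigma^2_0 + \sigma^2_1\lambda_i + \cdots + \sigma^2_d\lambda^d_i$ remain strictly positive so that $\bV$ is positive definite and the log-determinant is well defined; this constrains the admissible parameter region during optimization but does not affect the algebraic identity itself.
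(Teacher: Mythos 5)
Your proposal is correct and follows essentially the same route as the paper's proof: spectral decomposition of the symmetric $\bA$, simultaneous diagonalization of all powers $\bA^m$, and substitution of the resulting diagonal forms of $\ln|\bV|$ and $\bV^{-1}$ into the log-likelihood via $\bZ = \bU^\T(\bY-\beta\bX)$. Your added remark on requiring $\sigma^2_0 + \sigma^2_1\lambda_i + \cdots + \sigma^2_d\lambda^d_i > 0$ for positive definiteness is a sensible regularity condition that the paper leaves implicit.
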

A proof of Theorem~\ref{thm:theorem2} is provided in Appendix~\ref{sec:proof}. The estimated coefficient $\hat\beta$ represents the relationship of $\bX$ and $\bY$ accounting for the shared network dependence. If we assume that $\bV$ is the variance-covariance matrix for both $\bY$ and $\bX$, then $\hat\beta$ is equivalent to the estimator obtained from linear regression of the whitened variables, $\bX^* = \widehat\bV^{-1/2}\bX $ and $\bY^* =\widehat\bV^{-1/2}\bY$.

\section{Network Autocorrelation Model}\label{sec-nam}

\subsection{Model specification and estimation}\label{ssec-namspecify}
The network autocorrelation model has been widely used to address or account for autocorrelation within a variable of interest in a network~\citep{neuman2010structure,leenders2002modeling,fujimoto2011network,dow2007galton,doreian1984network,matthews2024evaluation,sewell2017network}. 
The standard form of the network autocorrelation model is given by:
\begin{eqnarray}\label{eq:nam}
   \bY  = \rho \bW \bY + \beta_\text{NAM} \bX + \bepsilon, 
\end{eqnarray}
where $\bepsilon \sim \mathcal{N}(0, \sigma^2\bI)$ denotes the error terms.

As an extension of standard linear regression model, the network autocorrelation model relaxes the assumption of independence in $\bY$ by incorporating the term $\rho\bW\bY$, allowing $\bY$ to exhibit network dependence. The coefficient $\beta_\text{NAM}$ represents the linear relationship between $\bY$ and $\bX$ after addressing the autocorrelation only within $\bY$, where $\bX$ is not necessarily network-dependent. In this sense, the network autocorrelation model may be appropriate for our target inference to examine the relationship between $X_i$ and $Y_i$, with the conditions for its applicability detailed in the next section.

Each element of the weight matrix $\bW$, $W_{ij}$, represents the extent to which $Y_i$ depends on $Y_j$ while the scalar parameter $\rho$  quantifies the strength of the network autocorrelation. The selection of $\bW$ is crucial, as it defines the network influence pattern and directly impacts parameter estimation and inference. Researchers typically determine $\bW$ based on their prior knowledge and available network information (e.g., $\bA$).
One common specification is the adjacency matrix $\bA$ (e.g.,~\cite{sewell2017network,fujimoto2011network}), and its row-normalized version of $W_{ij} = A_{ij}/\sum_j A_{ij}$ is also commonly used~(e.g., \cite{li2021comparison,dittrich2020network,dittrich2017bayesian,mcpherson2005environmental,gimpel2003political}).
The parameter space for $\rho$ also depends on the choice of $\bW$.
Rearranging~\eqref{eq:nam}, we have the following representation: $\bY = (\bI - \rho \bW)^{-1} (\beta_\text{NAM}\bX + \bepsilon)$, given that $(\bI - \rho \bW)$ is non-singular. 
A detailed discussion on the selection of $\bW$ and the feasible range of $\rho$ can be found in~\cite{leenders2002modeling} and \cite{ver2018relationship}, respectively.

Several methods have been developed to estimate parameters in the network autocorrelation model. \cite{doreian1981estimating} proposes likelihood function to estimate parameters $\rho$ and $\beta_\text{NAM}$, which has been predominantly used in
subsequent studies~\citep{doreian1984network,mizruchi2008effect, wang2014statistical}. Additionally, \cite{kelejian1998generalized} propose a two-stage least squares method through an instrument variable $\bZ:= (\bX, \bW\bY)$, involving regress $\bX$ on $\bZ$ in the first stage, and regress $\bY$ on fitted values from the first stage as the second stage.
\cite{dittrich2017bayesian} provide a Bayesian estimation approach where an informative prior for $\rho$ is recommended to use. A comprehensive review of different estimation methods for the network autocorrelation model can be found in~\cite{li2021comparison}. We present the likelihood-based estimation, where  $\widehat\beta_\text{NAM}$, $\hat\rho$, and $\hat\sigma^2$ are derived from maximizing the following log-likelihood:
\begin{equation}\label{mle for nam}
\begin{split}
    & l(\bY,\bX,\rho, \beta_\text{NAM},\sigma^2) \\ &=  -\frac{n}{2}\ln(2\pi\sigma^2)-\frac{n}{2}\ln(\det(\bI-\rho \bW))\\&-\frac{1}{2\sigma^2}((\bI - \rho\bW)\bY  - \beta_\text{NAM}\bX)^\T((\bI - \rho\bW)\bY - \beta_\text{NAM}\bX) 
\end{split}
\end{equation}
Maximizing the above log-likelihood function typically requires less computational burden to optimize~\eqref{mle for nam}, compared to optimizing~\eqref{eq:mixLM} in Section~\ref{sec:pwTonet}, as it avoids computing the matrix inverse in each iteration.

\subsection{Underlying assumptions}

We have observed the possibility of directly using the network autocorrelation model to infer the statistical association between two network-dependent variables, even without assuming autocorrelation in the exposure variable $\bX$.
Here, we elaborate on the assumptions required to use this model, particularly those related to the transmission process it implies. 
Specifically, similar to our discussion of the pre-whitening approach in Section~\ref{sec:pwTonet}, we demonstrate the connection between the direct transmission and the dependence structure implied by the model. For simplicity, consider using $\bA$ in place of $\bW$ and assume the following direct transmission process:
\begin{equation*}~\label{direct:nam}
    \bY^{t} = \rho\bA\bY^{t-1} + \beta_\text{NAM}\bX + \bepsilon, \quad t=1,2,\ldots, T
\end{equation*}
with $\bY^0 \sim \mathcal{N}(0,\bI)$. Then as the process evolves, we have the following form at $t$:
\begin{equation*}
    \bY^{t} =   \rho^t \bA^t\bY^{0}  +  \sum^{t-1}_{q=0}(\rho^q\bA^q)(\beta_{\text{NAM}}\bX + \bepsilon).
\end{equation*}
Under certain conditions on the eigenvalues of $\bA$, $\rho^t\bA^t$ vanishes as $t \to \infty$~\citep{golub2013matrix}. Therefore, $\lim_{t\to \infty}\rho^t \bA^t\bY^{0} = 0$. Then we have:
\begin{equation*}
    \lim_{t\to \infty}\bY^{t} = (\bI - \rho\bA)^{-1} (\beta_\text{NAM}\bX + \bepsilon),
\end{equation*}
where $ (\bI - \rho\bA)^{-1} = \sum^{\infty}_{q=0}(\rho^{q}\bA^q) $ by the Neumann series, assuming $(\bI - \rho\bA)$ is non-singular~\citep{lesage2009introduction}. With $\bY := \lim_{t\to\infty} \bY^{t}$ representing the outcome after a "long-run" process or an infinite level of interactions, we can interpret $\bY$ in the network autocorrelation model~\eqref{eq:nam} as the outcome at the equilibrium state reached following the long-term direct transmission process through network ties specified in $\bA$. 

The network autocorrelation model also dictates the explicit form of the variance-covariance matrix of $\bY$ in~\eqref{eq:nam}.
Given the Neumann series,
\begin{equation}~\label{namVar}
\begin{split}
     \text{Var}(\bY)& = (\bI - \rho\bA)^{-1}\text{Var}(\beta_\text{NAM}\bX + \bepsilon)(\bI - \rho\bA)^{-1}\\& =\sigma^2\sum^\infty_{q=0}(\rho\bA)^q.
\end{split}
\end{equation}
For this reason, the network autocorrelation model can also be interpreted as a special form of pre-whitening applied only to the outcome variable $\bY$ with the variance-covariance matrix of $\bV = \sigma^2\sum^\infty_{q=0}(\rho\bA)^q$. 

Even though the network autocorrelation model uses only the first-order $\bA$ as an input (as the weight matrix $\bW$), the variance-covariance matrix induced by the model actually depends on all higher-order terms of $\bA^q$, for $q=0,1,\ldots, \infty$. In other words, the model assumes non-zero autocorrelation between the two nodes as long as they are connected. Assuming $\rho$ is less than $1$, the variance-covariance matrix defined in~\eqref{namVar} enforces a decay pattern for each order of dependence. Particularly, the value $\sigma^2\rho^q$ decays as $q$ increases, indicating that dependence induced by nodes farther apart would diminish, as 
$\lim_{q\to\infty}\sigma^2\rho^q = 0$. The strength of high-order dependence (e.g., $\sigma^2\rho^2$) must be weaker than the strength of the low-order dependence (e.g., $\sigma^2\rho$). This decay pattern is similar to the autocorrelation structures typically assumed in temporal and spatial models.
Consequently, while the model allows a parsimonious representation of $\bV$ via $\bA$, it is not flexible enough to accommodate settings where high-order dependence is strong or equal to lower-order dependence.

\section{Simulation}\label{sec-sim}

The simulation study aims to evaluate the effectiveness of the aforementioned two approaches in addressing spurious associations due to network dependence.  We generate network-dependent data under different transmission processes and examine the performance of the two approaches.

\subsection{Simulation settings}

We first generate random networks from the Erdős–Rényi model~\citep{erdds1959random}, where each network tie is formed with a fixed probability, independently of other ties. We set the sample size to $n = 500$ nodes and fix the number of ties at $500$.

Given the network structure, we generate $\bY = (Y_{1}, \ldots, Y_{n})$ through the two transmission processes to generate network dependence: (a) the direct transmission process at $t=1$, and (b) the equilibrium transmission process. For each process, we vary the strength of network dependence as weak, medium or strong.
We \textit{independently} generate $\bX$ in the same way under each network dependence setting. 
For each simulation, with network-dependent $\bX$ and $\bY$, we measure a statistical association between $X_{i}$ and $Y_{i}$ using $n$ observations and examine whether there is significant evidence of inflated type-I errors based on the results of 500 replicates. 
We consider three association measures here: (i) the coefficient $\beta_\text{OLS}$ from a simple linear model, and (ii) the distance correlation $dCorr$, which captures the relationships beyond linearity~\citep{szekely2007measuring}, and (iii) linear coefficient $\beta_{\text{NAM}}$ obtained directly from the network autocorrelation model~\eqref{eq:nam}. 
More details about data generating models for the simulation studies can be found in Appendix~\ref{sec:simsup}.

We apply the two approaches to network-dependent ($\bX, \bY$): (1) pre-whitening and (2) network autocorrelation model. To apply (1), we specify the variance-covariance matrix for both $\bX$ and $\bY$ as follows.
\begin{equation*} \label{eq:second}
    \bV = \sigma^2_0 \bI + \sigma^2_1\bA + \sigma^2_2\bA^2,
\end{equation*}
which was proven to be the correct form of the variance-covariance matrix for the direct transmission process at $t=1$, according to Theorem~\ref{thm:theorem1} in Section~\ref{sec:pwTonet}.
Maximum likelihood estimation is then applied to each of $\bX$ and $\bY$ separately to derive the estimated variances $\widehat\bV_X$ and $\widehat\bV_Y$.
The pre-whitening then transforms ($\bX$,$\bY$) into network-independent versions ($\bX^*$,$\bY^*$) using $\bX^* = \widehat\bV^{-1/2}_X\bX$ and $\bY^* = \widehat\bV^{-1/2}_Y\bY$, respectively, each of which is used later to obtain association measures.  

\begin{table*}[!ht]
    \centering
    \caption{Rejection rates for testing the null hypothesis of no statistical correlation between $X_{i}$ and $Y_{i}$, using different association measures for $(\bX, \bY)$, $(\bX^*, \bY^*)$, and $(\bX^*_{\text{NAM}}, \bY^{*}_{\text{NAM}})$ across (a) the direct transmission ($t= 1$) and (b) the equilibrium transmission processes, under three different strengths of network dependence for each process.}
    \label{tab:rates}
    \resizebox{0.8\textwidth}{!}{
   \begin{tabular}{ll|ccc|cc|cc}
    \toprule
        &  & \multicolumn{3}{|c|}{$(\bX,\bY)$} &  \multicolumn{2}{|c|}{$(\bX^*,\bY^*)$} & \multicolumn{2}{c}{$(\bX^*_{\text{NAM}},\bY^*_{\text{NAM}})$}  \\
       &   &  $\beta_{\text{OLS}}$ &  $\text{dCorr}$ &$\beta_{\text{NAM}}$&  $\beta_{\text{OLS}}$ &  $\text{dCorr}$ & $\beta_{\text{OLS}}$&  $\text{dCorr}$  \\
     \hline
     \multicolumn{8}{l}{(a) Direct transmission process at $t =1 $}\\
     \hline
       Strength& Weak &0.222 & 0.488 & 0.178 & 0.058 & 0.086 & 0.220 & 0.434\\ 
     &         Medium &0.194 & 0.624 & 0.194 & 0.058 & 0.064 & 0.212 & 0.610\\ 
     &         Strong &0.192 & 0.844 & 0.172 & 0.030 & 0.042 & 0.188 & 0.842\\ 
     \hline
   \multicolumn{8}{l}{(b) Equilibrium transmission process}    \\
     \hline
     Strength  & Weak & 0.302 & 0.294 & 0.060 & 0.068 & 0.088 & 0.052 & 0.058\\
             & Medium & 0.452 & 0.528 & 0.057 & 0.236 & 0.204 & 0.042 & 0.056\\
             & Strong & 0.878 & 0.956 & 0.060 & 0.684 & 0.652 & 0.044 & 0.066\\
     \bottomrule
    \end{tabular} }
 
\end{table*}

We fit the network autocorrelation model with $(\bX, \bY)$ for two purposes; first, to obtain the association measure $\beta_{\text{NAM}}$ directly from the model, or second, to obtain the ``pre-whitened'' $(\bX^*_\text{NAM}, \bY^*_\text{NAM})$ through the variance estimated in the model. We fit the following model for the first purpose: 
\begin{equation} \label{eq:nam_sim}
    \bY = \rho \bA \bY + \beta_{\text{NAM}}\bX + \bepsilon, \quad \bepsilon \sim \mathcal{N}(0,\sigma^2\bI).
\end{equation}
For the first purpose, we use the estimated coefficient $\widehat\beta_{\text{NAM}}$ to evaluate whether the network autocorrelation model can mitigate the issue of spurious associations. 
On the other hand, as the network autocorrelation model specifies a particular variance-covariance structure for $\bY$, given by $ \bV_{Y,\text{NAM}} = \sigma^2(\bI - \rho\bA)^{-1}(\bI - \rho\bA)^{-1}$, we can construct the estimated variance-covariance matrix $\widehat\bV_{Y,\text{NAM}}$ by plugging in the estimated $\hat\rho$ and $\hat\sigma^2$ from the network autocorrelation model without any covariates. Similarly, assuming the same dependence structure between $\bX$ and $\bY$, we obtain $\widehat\bV_{X,\text{NAM}}$ by using $\bX$ as the response variable in~\eqref{eq:nam_sim}, also without any covariates. 
We then pre-whiten each variable by transforming $\bX$ and $\bY$ into $\bX^*_{\text{NAM}} =  \widehat\bV_{X,\text{NAM}}^{-1/2}\bX$ and $\bY^*_{\text{NAM}} =  \widehat\bV_{Y,\text{NAM}}^{-1/2}\bY$.

We evaluate the performance of each approach by (i) and (ii) for all network-dependent $(\bX,\bY)$, as well as the pre-whitened variables, $(\bX^*,\bY^*)$ and $(\bX^*_{\text{NAM}},\bY^*_{\text{NAM}})$. 
For network-dependent $(\bX, \bY)$, we apply simple linear regression, calculate the distance correlation, and fit the network autocorrelation model~\eqref{eq:nam_sim}. For the pre-whitened variables $(\bX^*,\bY^*)$ and $(\bX^*_\text{NAM},\bY^*_\text{NAM})$, we apply simple linear regression and calculate the distance correlation.

\subsection{Simulation Results}\label{sec-sim.re}

Table~\ref{tab:rates} presents the rejection rates of the null hypothesis of no association between $X_{i}$ and $Y_{i}$ across different transmission processes (indicated by each row) using different approaches and association measures (indicated by each column).

Without pre-whitening, the tests using network-dependent $(\bX,\bY)$ produce significant type-I errors, except in the case of $\beta_{\text{NAM}}$ under (b) the equilibrium transmission process where the dependence structure in $\bY$ is correctly specified. 
This result suggests that the network autocorrelation model alone could be sufficient to address the spurious associations if the dependence structure in the outcome variable is correctly specified (e.g., equilibrium state). 
However, the rejection rates of tests on $\beta_\text{NAM}$ are slightly higher than $0.05$, also suggesting that the network autocorrelation model, at most, partially accounts for spurious associations when both the response and covariate variables are network-dependent. These results align with recent literature~\citep{matthews2024evaluation}, which reports that while the network autocorrelation model can mitigate false positive rates, it does not achieve a desirable level (i.e., below 0.05).   
In general, the measure $\text{dCorr}$ yields higher rejection rates than $\beta_{\text{OLS}}$, as distance correlation captures both linear and nonlinear relationships, whereas the linear coefficient reflects only linear associations.

With the pre-whitened $(\bX^*, \bY^*)$, the rejection rates using $\beta_{\text{OLS}}$ and $\text{dCorr}$ remain around the nominal value of 0.05 under (a) the direct transmission process at $t=1$. This result validates our findings in Theorem~\ref{thm:theorem1}. However, under (b) the equilibrium transmission process, the pre-whitening $\bV$ is misspecified for the local dependence, leading to inflated type-I errors, with rejection rates increasing as the strength of network dependence grows. 
With the pre-whitened $(\bX^*_{\text{NAM}}, \bY^*_{\text{NAM}})$ by the $\bV$ specified in the network autocorrelation model~\eqref{eq:nam_sim} without any covariates, the rejection rates are close to the nominal value, slightly above 0.05 under (b) the equilibrium transmission process, while they are substantially higher under (a) the direct transmission process at $t=1$. The results demonstrate that the network autocorrelation model can be effectively used to obtain the pre-whitened variables, which can then be applied in separate models or measures to estimate the association between the variables. However, it may be challenging to compute the inverse matrix $(\bI - \hat\rho\bA)^{-1}$ to obtain the pre-whitened variables $\bX^{*}_{\text{NAM}}$ and $\bY^{*}_{\text{NAM}}$.

\section{Discussion}
\label{sec-dis}

In human subjects research, where a variable of interest represents behaviors or health outcomes, there is typically no inherent structure governing autocorrelation among individuals. {Unlike other dependence settings, researchers usually must rely on prior knowledge to determine the level of interaction (e.g., the choice of $t_{0}$) that result in autocorrelation, especially when only a single snapshot of the data is available. If the observed autocorrelated outcomes in the network are believed to result from a relatively small number of interactions (e.g., a smaller value of $t_0$), then the variance-covariance matrix using higher-order terms of the adjacency matrix, as in \eqref{eq:mixVariance} can be applied up to a moderate degree (e.g., $d = 10$) to pre-whiten the network-dependent variables.   In contrast, if the autocorrelated outcomes are believed to remain relatively stable even as more interactions are allowed (i.e., the process is believed to have reached equilibrium), a network autocorrelation model and related recent methodologies can be used directly for regression or to obtain the variance-covariance matrix from the model.}

{Without specifying any level of interactions leading to autocorrelation, one may attempt to estimate the variance-covariance matrix $\bV$ directly from the data.} However, estimating the variance-covariance matrix $\bV$ of one or a small number of variables is particularly challenging in a network.
While one might infer the dependence structure through multiple covariates under the assumption that they share similar patterns of dependence as in the case of genetic autocorrelation, the number of such covariates available is often quite limited. 
In such cases, researchers analyzing statistical associations between two variables sampled from interconnected subjects can have two directions: (1) explicitly constructing $\bV$ based on available network information (e.g., $\bA$) to pre-whiten each variable and then measuring their association, or (2) incorporating the structure within a regression model to jointly estimate the variance and regression coefficients. For the latter approach, we find that both linear mixed models used in GWAS and network autocorrelation models can help reduce spurious associations, provided that $\bV$ is correctly specified. However, the assumptions underlying each method regarding the transmission process that generates network dependence can be highly restrictive in practice. Moreover, the effectiveness of reducing spurious associations is sensitive to the specific assumptions implied in the transmission process.

Recently, there has been a growing number of attempts to model specific forms of dependence using Markov properties, allowing the joint density of $(\bY, \bX)$ to be factorized into a product of several conditionally independent components. 
In this framework, each variable (e.g., $\bX$ or $\bY$) is treated as a single realization of a particular type of Markov random field over the network, similar to certain equilibrium distributions~\citep{ogburn2020causal, tchetgen2021auto, ogburn2024causal, wu2024network, menzel2025fixed}. While this approach offers statistical (and also causal) advantages, it can also only capture highly specific dependence structures that may not hold in practice. 

Our study opens up potential future directions. First, it is common for $\bX$ and $\bY$ to share the same network dependence structure, though they may only partially overlap. In such cases, the extent of spurious associations is expected to decrease, and any remaining biases could be addressed more efficiently and effectively compared to the case with full overlaps.
Second, similar to GWAS, if multiple variables are available, their sample variances could be utilized to obtain insights into $\bY$. In network studies, an important avenue for exploration is how to leverage the typically small set of covariates, in addition to $\bA$, to better learn about shared dependence structures.  

\section*{Acknowledgement}
This research was supported by Grant Number 5P20GM103645 from the National Institute of General Medical Sciences.

\bibliographystyle{Chicago} 
\bibliography{sample}

\newpage
\begin{appendix}
\section*{\Large{Appendix}}

\end{appendix}
\begin{appendix}
\section{Proofs}
\label{sec:proof}
\begin{proof}[Proof of Theorem~\ref{thm:theorem1}]
The $\bY^1$ is given by: 
\begin{align*}
    \bY^1 = \kappa\bA\bY^0 + \alpha\bY^0 + \bepsilon^1,
\end{align*}
The variance matrix is:
\begin{align*}
    \text{Var}(\bY^1) = \text{Var}((\kappa\bA + \alpha\bI)\bY^0 + \bepsilon^1)
\end{align*}
Expanding it to have:
\begin{align*}
     \text{Var}(\bY^1) 
      & = (\kappa\bA + \alpha\bI)\text{Var}(\bY^0) (\kappa\bA + \alpha\bI)^\T + \text{Var}(\epsilon^1)
\end{align*}
Given $\bY^0 \sim \mathcal{N}(0,\bI)$ and $\bepsilon^1 \sim \mathcal{N}(0,\bI)$, we get:
\begin{align*}
   \text{Var}(\bY^1) = (\kappa\bA + \alpha\bI)(\kappa\bA + \alpha\bI)^\T + \bI
\end{align*}
Since $\bA$ is symmetric, we have:
\begin{align*}
    \text{Var}(\bY^1) =  \kappa^2\bA^2 + 2\alpha\kappa\bA + (\alpha^2 + 1)\bI,
\end{align*}
which completes the proof.
\end{proof}

\begin{proof}[Proof of Theorem~\ref{thm:theorem2}]
Given the eigendecomposition of $\bA$:$\bU\bD \bU^\T$, the high-order $\bA^m$ can be written as:
\begin{align*}
    \bA^{m} &= (\bU\bD\bU^\T)^m = \bU\bD^m\bU^{\T}
\end{align*}
since $\bU^\T\bU = \bI$. Therefore, we can write the variance structure in~\eqref{eq:mixVariance} as:
\begin{equation}\label{proof:V}
    \begin{split}
        \bV &= \sigma^2_0\bU\bI\bU^\T + \sigma^2_1\bU\bD\bU^\T +  \cdots +\sigma^2_d\bU\bD^d\bU^\T\\
        & =  \bU(\sigma^2_0 \bI + \sigma^2_1 \bD +  \cdots + \sigma^2_d\bD^d)\bU^\T
    \end{split}
\end{equation}
where $\bD$ is diagonal with eigenvalues $(\lambda_1,\lambda_2,\cdots,\lambda_n)$, the high-order $\bD^d$ is diagonal with $(\lambda^d_1,\lambda^d_2,\cdots,\lambda^d_n)$. 
The inverse of $\bV$ is:
\begin{equation}\label{proof:inverseV}
    \begin{split}
        \bV^{-1} = \bU(\sigma^2_0 \bI + \sigma^2_1 \bD +  + \cdots + \sigma^2_d\bD^d)^{-1} \bU^{\T},
    \end{split}
\end{equation}
The matrix $(\sigma^2_0 \bI + \sigma^2_1 \bD +  \cdots + \sigma^2_d\bD^d)$ is diagonal, and its inverse can be computed by taking the reciprocal of elements along the diagonal.

Next, we compute the determinate $|\bV|$ by ~\eqref{proof:V}:
\begin{equation*}
    \begin{split}
        |\bV| & = |\bU||\sigma^2_0 \bI + \sigma^2_1 \bD +  \cdots + \sigma^2_d\bD^d| |\bU^\T|\\
        & = |\sigma^2_0 \bI + \sigma^2_1 \bD +  \cdots + \sigma^2_d\bD^d|\\
        & = \prod^n_{i=1}(\sigma^2_0 + \sigma^2_1\lambda_i + \cdots + \sigma^2_d\lambda^d_i)
    \end{split}
\end{equation*}
since $\bU$ is an orthogonal matrix and $|\bU| = |\bU^\T| = 1$. Taking logarithm, we get:
\begin{equation}\label{proof:lnV}
    \begin{split}
         \ln|\bV| = \sum^n_{i=1}\ln(\sigma^2_0 + \sigma^2_1\lambda_i + \cdots+ \sigma^2_d\lambda^d_i)
    \end{split}
\end{equation}

Plugging \eqref{proof:lnV} and \eqref{proof:inverseV} into \eqref{eq:mixLM}, we have:
\begin{equation*}
    \begin{split}
         &l(\bY, \bX,\beta,\sigma^2_0,\sigma^2_1, \cdots, \sigma^2_d)   = -\frac{n}{2}\ln(2\pi) \\&-\frac{1}{2}\sum^n_{i=1} \ln(\sigma^2_0 + \sigma^2_1 \lambda_i + \cdots +  \sigma^2_d \lambda^d_i) \\ & - \frac{1}{2} \bZ^\T(\sigma^2_0 \bI + \sigma^2_1\bD + \cdots + \sigma^2_d \bD^d)^{-1}\bZ,
    \end{split}
\end{equation*}
which completes the proof.
\end{proof}

\section{Additional Simulation Studies}

\label{sec:simsup}
We set the sample size to $n = 500$ nodes with the number of ties at $500$. 
For (a) the direct transmission process at $t = 1$, initial values $\bY^0 = (Y^0_{1},\cdots, Y^0_n)$ are generated with $Y^0_i \overset{i.i.d}{\sim} \mathcal{N}(0,1)$ for $i = 1,\cdots,n$. The network-dependent vector $\bY = \bY^{1} = (Y^{1}_1,\cdots,Y^1_n)$ is generated by:
\begin{equation*}
    \begin{split}
        \bY^1 = \kappa \bA \bY^0 + \alpha \bY^0 + \bepsilon,  \bepsilon = (\epsilon_1,\cdots,\epsilon_n),  \epsilon_i\overset{\text{i.i.d.}}{\sim}\mathcal{N}(0, 0.1^2)
    \end{split}
\end{equation*}
We consider three specifications of $(\kappa,\alpha)$, where $(0.7,0.3)$, $(0.8,0.2)$, and $(0.9,0.1)$ for the strength of network dependence as weak, medium, and strong, respectively. We independently generated $\bX$ in the same way under each network dependence setting.

For (b) equilibrium transmission process, we generate the equilibrium state of the network dependence data by the network autocorrelation model:
\begin{equation*}
    \begin{split}
        \bY &= (\bI - \rho\bA)^{-1}(\bY^0 + \bepsilon), \bepsilon = (\epsilon_1, \cdots,\epsilon_n), \epsilon_i \overset{\text{i.i.d.}}{\sim} \mathcal{N} (0,0.1^2)
    \end{split}
\end{equation*}
where initial values $\bY^0 = (Y^0_{1},\cdots, Y^0_n)$ are generated with $Y^0_i \overset{i.i.d}{\sim} \mathcal{N}(0,1)$ for $i = 1,\cdots,n$. We set $\rho = (0.25, 0.27, 0.29)$ to represent the weak, medium, and strong network dependence, respectively. In the same way and under each network dependence setting, we independently generate $\bX$.
\bigskip

\end{appendix}

\end{document}